\def\endthebibliography{%
  \def\@noitemerr{\@latex@warning{Empty `thebibliography' environment}}%
  \endlist
}
\newtheorem{definition}{Definition}
\newtheorem{theorem}{Theorem}
\theoremstyle{remark}
\newtheorem*{remark}{Remark}
\tikzstyle{every pin edge}=[<-,shorten <=1pt]
\tikzstyle{neuron}=[circle,fill=black!25,minimum size=17pt,inner sep=0pt]
\tikzstyle{input neuron}=[neuron, fill=green!50]
\tikzstyle{output neuron}=[neuron, fill=red!50]
\tikzstyle{hidden neuron}=[neuron, fill=blue!50]
\tikzstyle{annot} = [text width=4em, text centered]
\newcommand{\relu}{\text{ReLU}\xspace}
\newcommand{\vect}[1]{\langle#1\rangle}
\newcommand{\sat}{\texttt{SAT}}
\newcommand{\unsat}{\texttt{UNSAT}}
\newcommand{\timeout}{\texttt{TIMEOUT}}
\newcommand{\ac}{\mathsf{A}}
\DeclareMathOperator{\polarityconstraint}{\mathbf{choosePhase}}
\DeclareMathOperator{\flipPhase}{\mathbf{flipPhase}}
\DeclareMathOperator{\getUnfixedRelus}{\mathbf{getUnfixedReLUs}}
\DeclareMathOperator{\Solve}{\mathbf{solve}}
\DeclareMathOperator{\dequeue}{dequeue}
\DeclareMathOperator{\enqueue}{enqueue}
\DeclareMathOperator{\notempty}{notEmpty}
\DeclareMathOperator{\Partition}{\mathbf{partition}}
\renewcommand{\gg}{\texttt{gg}\xspace}
\newcommand{\dnc}{S\textup{\&}C\xspace}
\newcommand{\ggMarabou}{\textrm{gg-Marabou}\xspace}
\newcommand{\DnCMarabou}{\dnc-Marabou\xspace} 
\newcommand{\infraThread}{\texttt{thread}\xspace}
\newcommand{\infraLocal}{\texttt{gg-local}\xspace}
\newcommand{\infraLambda}{\texttt{gg-lambda}\xspace}
\newcommand{\xhdr}[1]{{\noindent\bfseries #1}.}
\newif\ifcomments
\newif\ifoutline
\newif\iflong
\newcommand{\Comment}[2]{{\color{#1}{$\curlyvee$}\ifcomments\marginpar{\small\raggedright\color{#1} #2}}\fi}
\newcommand{\aleks}[1]{\Comment{violet}{\textbf{Aleks:} #1}}
\newcommand{\todo}[1]{\Comment{red}{\textbf{[TODO]} #1}}
\renewcommand{\paragraph}[1]{\vspace{1mm}\noindent{\bf #1}\ }
\title{Parallelization Techniques for\\ Verifying Neural Networks}
\DeclareRobustCommand*{\IEEEauthorrefmark}[1]{%
  \raisebox{0pt}[0pt][0pt]{\textsuperscript{\footnotesize #1}}%
}
\author{\IEEEauthorblockN{Haoze Wu\IEEEauthorrefmark{1},
Alex Ozdemir\IEEEauthorrefmark{1},
Aleksandar Zelji\'c\IEEEauthorrefmark{1},
Kyle Julian\IEEEauthorrefmark{1},
Ahmed Irfan\IEEEauthorrefmark{1},
Divya Gopinath\IEEEauthorrefmark{2},\\
Sadjad Fouladi\IEEEauthorrefmark{1},
Guy Katz\IEEEauthorrefmark{5},
Corina Pasareanu\IEEEauthorrefmark{3,4}, and
Clark Barrett\IEEEauthorrefmark{1}}
  \IEEEauthorblockA{\IEEEauthorrefmark{1}Stanford University, USA.
    \IEEEauthorrefmark{2}NASA Ames, KBR Inc.
    \IEEEauthorrefmark{3}NASA Ames, Moffett Field, CA.}
  \IEEEauthorblockA{\IEEEauthorrefmark{4}Carnegie Mellon University, USA.
    \IEEEauthorrefmark{5}The Hebrew University of Jerusalem, Israel.}}
\begin{document}
\maketitle
\begin{abstract}

  Inspired by recent successes of parallel techniques for solving
  Boolean satisfiability, we investigate a set of strategies and heuristics to 
  leverage parallelism and improve the scalability of neural network
  verification. We present a general description of the Split-and-Conquer partitioning algorithm,
  implemented within the Marabou framework, and discuss its parameters and heuristic
  choices. In particular, we explore two novel partitioning strategies, that
  partition the input space or the phases of the neuron activations,
  respectively. We introduce a branching heuristic and a direction heuristic
  that are based on the notion of polarity. We also introduce a highly parallelizable
  pre-processing algorithm for simplifying neural network verification problems.
  An extensive experimental evaluation shows the benefit of these techniques on
  both existing and new benchmarks. A preliminary experiment ultra-scaling
  our algorithm using a large distributed cloud-based platform also shows
  promising results.
\end{abstract}

\section{Introduction}%
\label{sec:intro}

Recent breakthroughs in machine learning, specifically the rise of \emph{deep
  neural networks (DNNs)}~\cite{deeplearning}, have expanded the horizon of
real-world problems that can be tackled effectively. Increasingly, complex
systems are created using machine learning models~\cite{doi:10.2514/1.G003724}
instead of using conventional engineering approaches. Machine learning models
are trained on a set of (labeled) examples, using algorithms that allow the model
to capture their properties and generalize them to unseen inputs. In practice,
DNNs can significantly outperform hand-crafted systems, especially in fields
where precise problem formulation is challenging, such as image
classification~\cite{DBLP:conf/nips/KrizhevskySH12}, speech
recognition~\cite{hinton2012deep} and game playing~\cite{silver2016mastering}.

Despite their overall success, the black-box nature of DNNs calls into question
their trustworthiness and hinders their application in safety-critical domains.
These limitations are exacerbated by the fact that DNNs are known to be vulnerable to
{\em adversarial perturbations}, small modifications to the inputs that lead to wrong
responses from the network~\cite{DBLP:journals/corr/SzegedyZSBEGF13}, and real-world attacks
have already been carried out against safety-critical deployments of
DNNs~\cite{DBLP:journals/corr/CisseANK17,
  DBLP:conf/iclr/KurakinGB17a}.
One promising approach for addressing these concerns is the use of formal
methods to certify and/or obtain rigorous guarantees about DNN behavior.





Early work in DNN formal verification~\cite{DBLP:conf/cav/PulinaT10,
  DBLP:journals/aicom/PulinaT12} focused on translating DNNs and their
properties into formats supported by existing verification tools like general-purpose
\emph{Satisfiability Modulo Theories} (SMT) solvers (e.g.,
Z3~\cite{z3}, CVC4~\cite{cvc4}). However, this approach was limited to small toy
networks (roughly tens of nodes).

More recently, a number of DNN-specific approaches and solvers, including
Reluplex~\cite{KaBaDiJuKo17Reluplex}, ReluVal~\cite{wang2018formal}, Neurify~\cite{WangPWYJ18F}, Planet~\cite{ehlers2017formal},
and Marabou~\cite{katz2019marabou}, have been proposed and developed.
These techniques scale to hundreds or a few thousand nodes.  While a
significant improvement, this is still several orders of magnitude fewer than
the number of nodes present in many real-world applications.  Scalability thus
continues to be a challenge and the subject of active research. 

Inspired by recent successes with parallelizing
SAT solvers~\cite{DBLP:conf/hvc/HeuleKWB11, DBLP:conf/sat/HeuleKM16}, we
propose a set of strategies and heuristics
for leveraging parallelism to improve the scalability of neural
network verification.
The paper makes the following contributions:
\begin{inparaenum}
\item We present a divide-and-conquer algorithm, called Split-and-Conquer (\dnc), for neural
  network verification that is parameterized by different partition strategies and
  constraint solvers (Sec.~\ref{sec:dnc}).
\item We describe two partitioning strategies for this algorithm
  (Sec.~\ref{sec:partition}): one that works by partitioning the input domain
  and a second one that performs case splitting based on the activation
  functions in the neural network. The first strategy was briefly
  mentioned in the Marabou tool paper~\cite{katz2019marabou}; we describe it in
  detail here. The second strategy is new.
\item We introduce the notion of \emph{polarity} and use it to refine the
  partitioning (Sec.~\ref{sec:polarity});
\item We introduce a highly parallelizable pre-processing algorithm that significantly
  simplifies verification problems (Sec.~\ref{sec:lookahead});
\item We show how polarity can additionally be used to speed up satisfiable
  queries (Sec.~\ref{sec:biased}); and
\item We implement the techniques in the Marabou framework and
  evaluate on existing and new neural network
  verification benchmarks from the aviation domain. We also perform an
  \emph{ultra-scalability} experiment using cloud computing
  (Sec.~\ref{sec:experiments}). Our experiments show that the new and improved
  Marabou can outperform the previous version of Marabou as well as other state-of-the-art verification tools such
  as Neurify, especially on perception networks with a large number of inputs.
\end{inparaenum}
%
We begin with preliminaries, review related work in Sec.~\ref{sec:related}, and conclude in Sec.~\ref{sec:concl}.

\section{Preliminaries}
\label{sec:prelim}

In this section, we briefly review neural networks and their formalization, as
well as the Reluplex algorithm for verification of neural networks.

\subsection{Formalizing Neural Networks}

\paragraph{Deep Neural Networks.}
A feed-forward \emph{Deep Neural Network} (DNN) consists of a sequence of
layers, including an input layer, an
output layer, and one or more hidden layers in between.
Each non-input layer comprises multiple \emph{neurons}, whose values can be
computed from the outputs of the preceding layer. Given an assignment of
values to inputs, the output of the DNN can be
computed by iteratively computing the values of neurons in each layer.
Typically, a neuron's value is determined by computing an affine
function of the outputs of the neurons in the previous layer and then
applying a non-linear function, known as an \emph{activation
  function}. A popular activation function is the Rectified Linear Unit (ReLU),
  defined as $ReLU(x)=max(0, x)$ (see~\cite{DBLP:conf/icml/NairH10,
  DBLP:conf/nips/KrizhevskySH12, maas2013rectifier}). In this paper, we focus on
DNNs with ReLU activation functions; thus the output of each
neuron is computed as $ReLU(w_1 \cdot v_1 + \ldots w_n \cdot v_n +b)$, where $v_1
\ldots v_n$ are the values of the previous layer's neurons, $w_1 \ldots w_n$ are
the weight parameters, and $b$ is a bias parameter associated with the neuron. A
neuron is \emph{active} or in the \emph{active phase}, if its output is positive;
otherwise, it is \emph{inactive} or in the \emph{inactive phase}.
%

\begin{figure}[t]
\begin{center}
\includegraphics[scale=0.15]{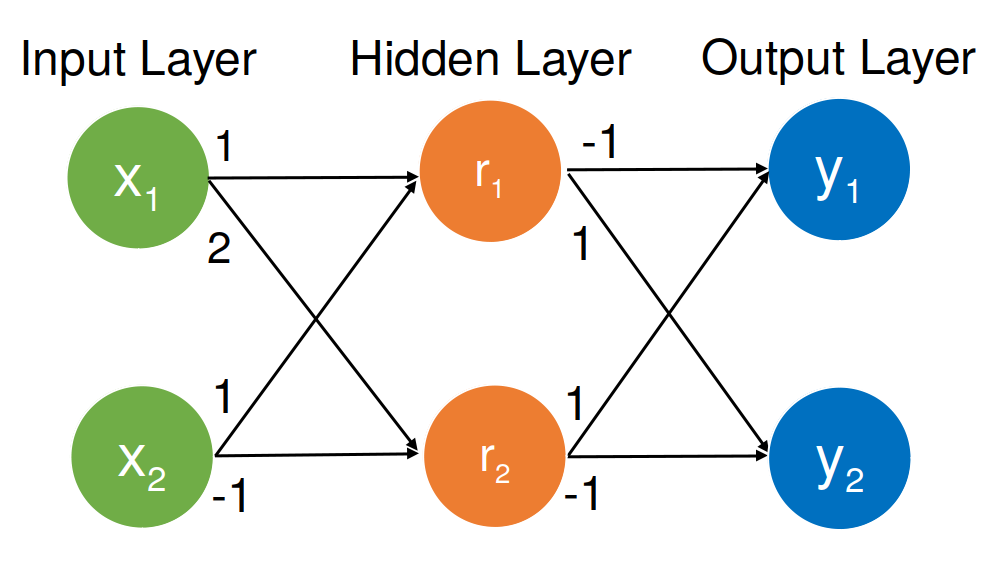}
\end{center}
\vspace{-4mm}
\caption {A small feed-forward DNN $\mathcal{N}$. \label{fig:example}}
\vspace{-4mm}
\end{figure}

\paragraph{Verification of Neural Networks.}
A neural network verification problem has two components: a neural
network $N$, and a property $P$. $P$ is often of the form $P_{in} \Rightarrow
P_{out}$, where $P_{in}$ is a formula over the inputs of $N$ and $P_{out}$ is a
formula over the outputs of $N$.
Typically, $P_{in}$ defines an input region $I$,
and $P$ states that for each point in $I$, $P_{out}$ holds for the
output layer. Given a query like this, a verification tool tries to find a
counter-example: an input point $i$ in $I$, such that when applied to $N$,
$P_{out}$ is false over the resulting outputs.  $P$ holds only if such a
counter-example does not exist. 

The property to be verified may arise from the specific domain where the network is deployed.
For instance, for networks that are used as controllers in 
an unmanned aircraft collision avoidance system (e.g., the ACAS Xu
networks~\cite{KaBaDiJuKo17Reluplex}), we would expect
them to produce sensible advisories according to the location and
the speed of the intruder planes in the vicinity. On the other hand, there are also properties that
are generally desirable for a neural network. One such property is
\emph{local adversarial robustness} \cite{katz2017towards},
which states that a small norm-bounded input perturbation should not cause major spikes 
in the network's output.
%
%
More generally, a property may be an arbitrary formula over input values,
output values, and values of hidden layers---such problems arise for example in the investigation of the neural networks'
explainability \cite{inv}, where one wants to check whether the activation of a
certain \relu $r$ implies a certain output behavior (e.g., the neural network
always predicts a certain class).
%
%
The verification of neural networks with \relu\ functions is decidable and
NP-Complete~\cite{KaBaDiJuKo17Reluplex}. As with many other verification problems,
scalability is a key challenge.

\paragraph{VNN Formulas.}
We introduce the notion of VNN  (Verification of Neural Network) formulas to formalize Neural Network verification
queries. Let $\mathcal{X}$ be a set of variables. A \emph{linear constraint} is
of the form
$\sum_{x_i\in \mathcal{X}}a_i x_i\bowtie b,$
where $a_i, b$ are rational constants, and $\bowtie\ \in \{\leq, \geq,=\}$. A
\emph{\relu~constraint} is of the form
$\relu(x_i) = x_j,$
where $x_i, x_j \in
\mathcal{X}$.
\begin{definition}
A \emph{VNN formula} $\phi$ is a conjunction of linear constraints
and $\relu$ constraints.
\end{definition}
%
A feed-forward neural network can be encoded as a VNN formula as follows.
Each $\relu$ $r$ is represented by introducing a pair of input/output variables $r_b,
r_f$ and then adding a \relu constraint \relu{}($r_b$) = $r_f$.
We refer to $r_b$ as the \emph{backward-facing variable}, and it
is used to connect $r$ to the preceding layer.
$r_f$ is called the \emph{forward-facing variable} and is
used to connect $r$ to the next layer. The weighted
sums are encoded as linear constraints.



In general, a property could be any formula $P$ over the variables used to
represent $\mathcal{N}$.  To check whether $P$ holds on $\mathcal{N}$, we simply conjoin the
representation of $\mathcal{N}$ with the negation of $P$ and use a constraint solver to
check for satisfiability.  $P$ holds iff the constraint is unsatisfiable.

Note
that a solver for VNN formulas can solve a property
$P$ only if the negation
of $P$ is also a VNN formula.  We assume this is the case in this paper, but
more general properties can be handled by decomposing $\neg P$ into a disjunction of
VNN formulas and checking each separately (or, equivalently, using a DPLL($T$)
approach~\cite{NieuwenhuisOT06}).  This works as long as the
atomic constraints are linear.  Non-linear constraints (other than ReLU) are beyond the scope of this paper.

\subsection{The Reluplex Procedure}
The Reluplex procedure~\cite{KaBaDiJuKo17Reluplex} is a sound, complete and
terminating algorithm that decides the satisfiability of a VNN formula. The
procedure extends the Simplex algorithm---a standard efficient decision
procedure for conjunctions of linear constraints---to handle $\relu$
constraints. At a high level, the algorithm iteratively searches for an
assignment that satisfies all the linear constraints, but treats the \relu
constraints lazily in the hope that many of them will be irrelevant for proving
the property.
Once a satisfying assignment for linear constraints is found, the \relu constraints
are evaluated. If all the \relu constraints are satisfied, a model is
found and the procedure concludes that the VNN formula is satisfiable. However,
some \relu constraints may be violated and need to be fixed.
There are two ways to fix a violated \relu constraint $r$:
\begin{inparaenum}
\item \emph{repair the assignment} by updating the assignment of forward-facing $r_f$ or
  backward-facing variable $r_b$ to satisfy $r$, or 
\item \emph{case split} by considering separate cases for each phase of $r$
  (adding the appropriate constraints in each case).
\end{inparaenum}
In both cases, the search continues using the Simplex algorithm, in the first
with a soft correction via assignment update and in the second by adding hard
constraints to the linear problem. Lazy handling of \relu{}s is achieved by the
threshold parameter $t$ --- the number of times a \relu is repaired before the
algorithm performs a case split. In~\cite{KaBaDiJuKo17Reluplex}, this parameter
was set to 20, but even more eager splitting is beneficial in
some cases. The Reluplex algorithm also uses bound propagation to fix \relu{}s to
one phase whenever possible.

In this paper, we explore heuristic choices behind the two options to
handle violated \relu constraints. In the case of assignment repair, the
question is which variable assignment, $r_f$ or $r_b$, to modify (often both are
possible). We refer to the strategy used to make this decision as the
\textit{direction heuristic}, and we discuss direction heuristics, especially in
the context of parallel solving in Sec.~\ref{sec:biased}. For case
splitting, the question is which \relu constraint to choose. We
refer to the strategy used for making this decision as the \textit{branching
  heuristic}. We explore branching heuristics and their application to
parallelizing the algorithm in Sec.~\ref{sec:partition}
and Sec.~\ref{sec:polarity}.

\section{\dnc{}: Parallelizing the Reluplex Procedure}
\label{sec:dnc}

In this section, we present a parallel algorithm called \emph{Split-and-Conquer}
(or simply \dnc{}) for solving VNN formulas, using the Reluplex procedure and
an iterative-deepening strategy. We discuss two partitioning
strategies:  input interval splitting and ReLU case splitting. 

\begin{remark} A divide-and-conquer approach with an
input-splitting strategy was described in the Marabou tool paper
\cite{katz2019marabou}, albeit briefly and informally. We provide here a
more general framework, which includes new techniques and
heuristics, described in detail below.
\end{remark}

\subsection{The \dnc{} algorithm}

\begin{figure}[t]
\begin{center}
\includegraphics[scale=0.18]{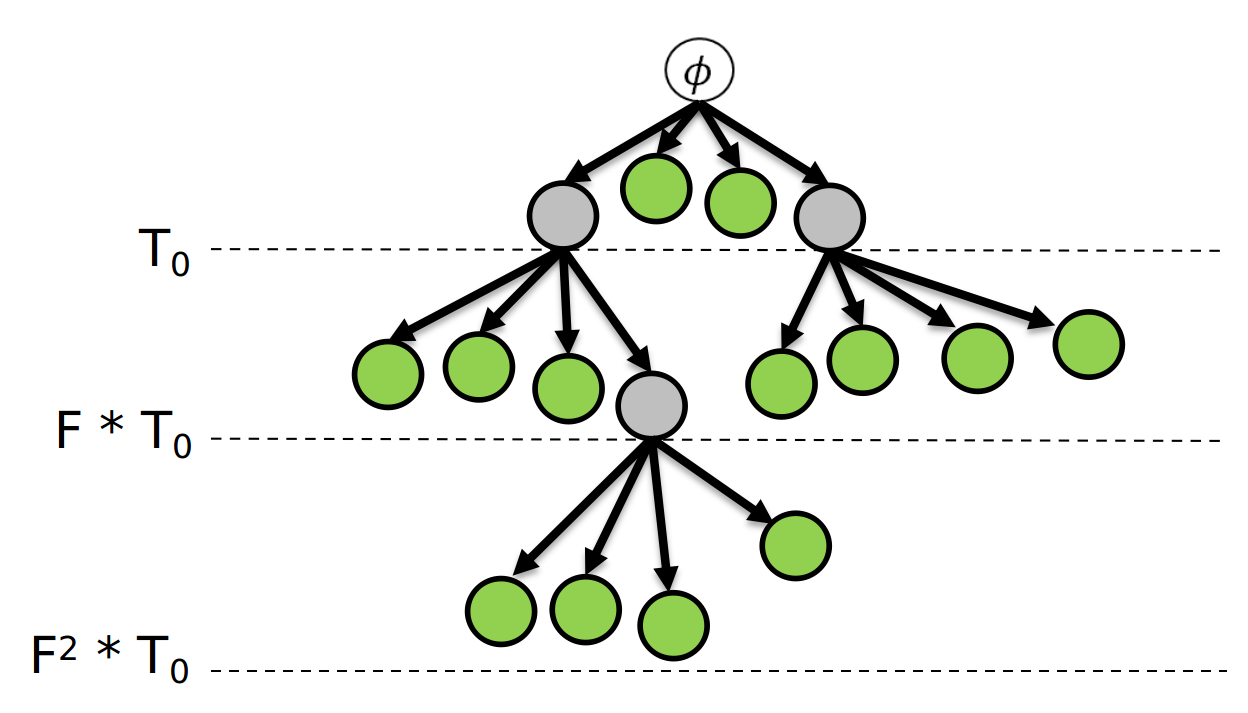}
\end{center}
\caption {An execution of the \dnc{} algorithm.\label{fig:dnc}}
\end{figure}

The \dnc{} algorithm partitions an input problem into several sub-problems (that
are ideally easier to solve) and tries to solve each sub-problem within a given
time budget. If solving a problem exceeds the time budget, that problem is
further partitioned
and the resulting sub-problems are allocated an
increased time budget.
Fig.~\ref{fig:dnc} shows solving of problem $\phi$
as a tree, where the root of the tree denotes the original problem.
Sub-problems that exceed their allotted time budget are partitioned, becoming inner
nodes, and leaves are sub-problems solved within their time budget.
A formula $\phi$ is satisfiable if some leaf is satisfiable. If
the partitioning is \emph{exhaustive}, that is: $\phi \coloneqq \bigvee_{\phi_i
  \in \Partition(\phi,n)}\phi_i,$ for any $n>1$, then $\phi$ is unsatisfiable
iff all the leaves are unsatisfiable.

The pseudo-code of the \dnc{} algorithm is shown in Algorithm~\ref{alg:dnc},
which can be seen as a framework parameterized by the {partitioning
  heuristic} and the underlying {solver}. Details of these parameters are
abstracted away within the $\Partition$ and $\Solve$ functions respectively and
will be discussed in subsequent sections. The \dnc{} algorithm takes as input
the VNN formula $\phi$ and the following parameters: initial number of
partitions $N_0$, initial timeout $T_0$, number of partitions $N$, and the
timeout factor $F$. During solving, \dnc{} maintains a queue $Q$ of
$\langle$query, timeout$\rangle$ pairs, which is initialized with the partition
$N_0 \coloneqq \langle \phi, T_0 \rangle$. While the queue is not empty, the
next pair $\langle\phi', t\rangle$ is retrieved from it, and the query $\phi'$
is solved with time budget $t$. If $\phi'$ is satisfiable, then the original
query $\phi$ is satisfiable, and \sat{} is returned. If $\phi'$ times out,
$\Partition(\phi',N)$ creates $N$ sub-problems of $\phi'$, each of which is enqueued
with an increased time budget $t\cdot F$. If the sub-problem $\phi'$ is
unsatisfiable, no special action needs to be taken. If $Q$ becomes empty, the
original query is unsatisfiable and the algorithm returns \unsat{}.
Note that the main loop of the algorithm naturally lends itself to
parallelization, since the $\Solve$ calls are mutually independent and
query-timeout pairs can be asynchronously enqueued and dequeued.
%
%

\begin{algorithm}[t]
\caption{Split-and-Conquer}
\label{alg:dnc}
\small
\begin{algorithmic}
\STATE {\bfseries Input:} query $\phi$, initial partition size $N_0$, initial
timeout $T_0$, partition size $N$, timeout factor $F$ 
\STATE {\bfseries Output:} \sat{}/\unsat{}
\FOR{$\psi \in {\Partition}(\phi, N_0)$}
\STATE $Q.\enqueue(\langle \psi, T_0 \rangle)$
\ENDFOR
\WHILE{$Q.\notempty()$} 
\STATE $\langle \phi', t \rangle \leftarrow Q.\dequeue()$
\STATE $result \leftarrow \Solve(\phi', t)$
\IF{$result = \sat{}$}
\STATE {\bf return} \sat{}
\ELSIF{$result = \timeout{}$}
\FOR{$\psi \in \Partition(\phi', N)$}
\STATE $Q.\enqueue(\langle \psi, t \cdot F \rangle)$
\ENDFOR
\ENDIF
\ENDWHILE
\STATE {\bf return} \unsat{}
\end{algorithmic}
\end{algorithm}

We state without proof the following result, which is a well-known property of
such algorithms.

\begin{theorem} \label{thm:soundcomplete}
  The Split-and-Conquer$(\phi, N_0, T_0, N, F )$ algorithm is sound and complete
  if the following holds:
\begin{inparaenum}
  \item the $\Solve$ function is sound and complete; and
  \item the $\Partition$ function is exhaustive.
\end{inparaenum} 
\end{theorem}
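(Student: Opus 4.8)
The plan is to prove soundness and completeness separately, each by maintaining a simple invariant of the work queue $Q$ across iterations of the \textbf{while} loop. Write $\bigvee Q$ for the disjunction of the query components $\psi$ of the pairs $\langle\psi,t\rangle$ currently in $Q$. I would carry along two invariants: (a) at every point before a $\sat$ is returned, $\bigvee Q \equiv \phi$; and (b) every query $\psi$ that is ever enqueued satisfies $\psi \models \phi$. Both follow by induction on loop iterations. The base case uses that $Q$ is seeded with $\Partition(\phi,N_0)$ and that $\Partition$ is exhaustive, so $\bigvee_{\psi\in\Partition(\phi,N_0)}\psi \equiv \phi$. For the inductive step I would do the obvious three-way case split on the result of $\Solve(\phi',t)$ after dequeuing $\langle\phi',t\rangle$: if it is $\unsat$, soundness of $\Solve$ makes $\phi'$ unsatisfiable, so dropping it from the disjunction preserves (a); if it is $\timeout$, $\phi'$ is replaced by the disjuncts $\Partition(\phi',N)$, whose disjunction is equivalent to $\phi'$ by exhaustiveness, preserving (a), and each new disjunct entails $\phi'$ hence $\phi$, preserving (b); if it is $\sat$ the loop exits and there is nothing to preserve.

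Soundness then follows immediately. A $\sat$ answer is returned only when $\Solve(\phi',t)=\sat$ for some previously enqueued $\phi'$; soundness of $\Solve$ yields a model of $\phi'$, and by invariant (b) this is a model of $\phi$. An $\unsat$ answer is returned only when $Q$ is empty, at which point $\bigvee Q$ is the empty disjunction, i.e.\ $\mathtt{false}$; by invariant (a), $\phi\equiv\mathtt{false}$, so $\phi$ is unsatisfiable.

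For completeness I would reduce to termination: once termination is known, soundness pins down the verdict, since a satisfiable $\phi$ cannot yield $\unsat$ without contradicting (a) and an unsatisfiable $\phi$ cannot yield $\sat$ without contradicting (b). For termination I would note that $\Partition$ returns finitely many sub-problems ($N_0$, resp.\ $N$), so the partition tree of Fig.~\ref{fig:dnc} is finitely branching; if the algorithm failed to terminate, that tree would be infinite, hence by K\"onig's lemma would contain an infinite branch $\phi = \psi_0 \leftarrow \psi_1 \leftarrow \cdots$, each $\psi_{d+1}$ a sub-problem produced by $\Partition$ from $\psi_d$, and each $\psi_d$ dequeued and found to $\timeout$ at its budget $T_0 F^{d}$. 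Since $\Solve$ is complete it halts on every $\psi_d$, so there is a finite $\tau(\psi_d)$ beyond which $\Solve(\psi_d,\cdot)$ no longer times out; with $F>1$ the budgets diverge, so sustaining the infinite branch forces $\tau(\psi_d) > T_0 F^{d}$ for all $d$.

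The step I expect to be the real obstacle is exactly ruling out that last possibility --- an infinite branch along which the sub-problems get strictly and unboundedly harder, outpacing the geometric timeout. This cannot be excluded from ``$\Solve$ complete'' and ``$\Partition$ exhaustive'' in the abstract; it requires a property of the concrete partition strategy. For ReLU case-splitting the tree simply has bounded depth (at most one split per \relu, after which the formula is purely linear and Simplex decides it), so no infinite branch exists. For input-interval splitting one would instead argue that on sufficiently small input regions bound propagation fixes every \relu phase, again capping each branch's depth. This gap between the generic statement and its branch-specific justification is presumably why the paper records the theorem as a well-known property of such iterative-deepening divide-and-conquer schemes and states it without proof.
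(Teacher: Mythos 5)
The paper states this theorem explicitly \emph{without} proof (``a well-known property of such algorithms''), so there is no official argument to compare against; your queue-invariant proof is precisely the standard argument being alluded to, and it is correct. Invariants (a) and (b) both follow from exhaustiveness as the paper defines it ($\phi \equiv \bigvee_{\phi_i \in \Partition(\phi,n)}\phi_i$ implies each $\phi_i \models \phi$), and your three-way case split on the result of $\Solve$ is the right inductive step; the only cosmetic point is that (a) should be read as a loop-boundary invariant, since the dequeued $\phi'$ is momentarily outside $Q$ while $\Solve$ runs.

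Your one substantive concern---that completeness-as-termination cannot be derived from ``$\Solve$ complete'' and ``$\Partition$ exhaustive'' alone---is well diagnosed, but it is aimed at a stronger reading of the theorem than the paper intends. Immediately after the theorem statement the paper separates termination out as an additional claim requiring ``modest assumptions on $\Solve$ and $\Partition$, and $F>1$,'' to be justified per instantiation; so the theorem itself is only asserting partial correctness (every $\sat$ or $\unsat$ verdict returned is correct), which your invariants already establish without any appeal to K\"onig's lemma. Your observation that the ReLU-splitting strategy bounds the tree depth (and that input splitting needs an argument that bound propagation eventually fixes all ReLUs on small enough boxes) is exactly the kind of strategy-specific reasoning the paper defers to when it says the algorithm ``is terminating for the instantiations we consider below,'' so that part of your write-up fills a gap the paper deliberately leaves open rather than contradicting it.
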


\noindent
In addition, with modest assumptions on $\Solve$ and $\Partition$, and with $F >
1$, the algorithm can be shown to be terminating. In particular, it
is terminating for the instantiations we consider below.
The \dnc{} algorithm can be tailored to the available computing resources
(e.g., number of processors) by specifying the number of initial splits $N_0$. The
other three search parameters of \dnc{} specify the dynamic behavior of the
algorithm, e.g. if $T_0$ and $F$ are small, or
if $N$ is large, then new sub-queries are created
frequently, which entails a more aggressive \dnc{} strategy (and vice versa).
Notice that we can completely discard the dynamic aspect of \dnc{} by setting
the initial timeout to be $\infty$.

A potential downside of the algorithm is that each call to $\Solve$ that times
out is essentially wasted time, overhead above and beyond the useful work
needed to solve the problem. Fortunately, as the following theorem shows,
the number of wasted calls is bounded.

\begin{theorem}
  When Algorithm~\ref{alg:dnc} runs on an unsatisfiable formula with $N\le N_0$,
  the fraction of calls to $\Solve$ that time out is less than $\frac{1}{N}$.
\end{theorem}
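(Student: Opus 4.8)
The plan is to model the execution as the tree depicted in Figure~\ref{fig:dnc}: the root is $\phi$, every internal node is a subquery that timed out and was partitioned into exactly $N$ children (since $N$ is the partition size used for every re-partitioning step), and every leaf is a subquery that was solved within its time budget. On an unsatisfiable formula, no $\Solve$ call ever returns $\sat{}$, so the algorithm exhausts the queue and the full tree is built. The calls to $\Solve$ that time out are exactly the internal nodes, and the calls that do useful work (terminating with $\unsat{}$) are exactly the leaves. So the quantity to bound is $\frac{\#\text{internal nodes}}{\#\text{internal nodes} + \#\text{leaves}}$.

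First I would set up the counting. Let the tree have $k$ internal nodes. The root has $N_0$ children (the initial partition into $N_0$ pieces — strictly, $N_0$ root-level subqueries rather than a single root, but this only helps). Each of the $k$ internal nodes contributes $N$ children. Every non-root node is a child of exactly one node, so the total number of nodes other than the $N_0$ initial ones is $Nk$ minus those $k$ internal nodes that are themselves non-root... cleaner: count edges. The number of leaves $\ell$ satisfies $\ell = (\text{total nodes}) - k$, and total nodes $= N_0 + N\cdot(\text{number of internal nodes among the partitioned ones})$. Since every internal node is partitioned into $N$ children and the $N_0$ initial subqueries are the only ones not produced by a partition step, we get $\ell + k = N_0 + N k$, hence $\ell = N_0 + (N-1)k$. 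Therefore the fraction of timed-out calls is
\[
\frac{k}{k+\ell} = \frac{k}{N_0 + N k}.
\]

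Next I would maximize this over $k \ge 0$: the function $k \mapsto \frac{k}{N_0 + Nk}$ is increasing in $k$ and tends to $\frac{1}{N}$ as $k \to \infty$, so it is strictly less than $\frac{1}{N}$ for every finite $k$ (and the tree is finite by the termination remark preceding the theorem). This already gives the bound with the hypothesis $N_0 \ge 1$; the stated hypothesis $N \le N_0$ is in fact stronger than needed for this particular conclusion, though it would be needed to get a sharper constant, and I would remark on that. The one point requiring care — the main obstacle — is justifying the clean identity $\ell = N_0 + (N-1)k$ in full rigor: I must confirm that every internal node really does spawn exactly $N$ children (true because $\Partition(\phi', N)$ is called with the fixed second argument $N$ at every timeout, per Algorithm~\ref{alg:dnc}), that children are never shared between parents (true since each enqueued pair is dequeued once), and that the process terminates so the tree is finite (granted by the termination remark). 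Everything else is the elementary monotonicity argument above.
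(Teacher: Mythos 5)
Your proof is correct and takes essentially the same route as the paper's: both count nodes and edges of the partition tree and bound the ratio of internal (timed-out) nodes to all nodes, the paper via $n = Nt + k + 1$ with $k = N_0 - N$ and you via the equivalent identity $\ell = N_0 + (N-1)k$. Your side observation that only $N_0 \ge 1$ is really needed (since the original query is partitioned before any $\Solve$ call, so the root is not itself a timed-out call) is a valid minor sharpening of the stated hypothesis.
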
%
\begin{proof}
Consider first the case when $N = N_0$.  We can view \dnc{}'s UNSAT proof as constructing an $N$-ary tree, as
  shown in Fig.~\ref{fig:dnc}.
  The $\ell$ leaf nodes are calls to $\Solve$ that do not time out.
  The $t$ non-leaves are calls to $\Solve$ that do time out.
  Since this is a tree,
    the total number of nodes $n$
    is one more than the number of edges.
  Since each query that times out has an edge to each of its $N$ sub-queries,
    the number of edges is $Nt$.
  Thus we have
  \(
    n = Nt + 1
  \)
  which can be rearranged to show the fraction of queries that time out:
  $ \frac{t}{n} = \frac{1-\nicefrac{1}{n}}{N} < \frac{1}{N} $. %
If $N < N_0$, then let $k=N_0-N$.  The number of nodes is then $n = Nt + k +
1$, and the result follows as before.
\end{proof}

\subsection{Partitioning Strategies}
\label{sec:partition}


A partitioning strategy 
specifies how to decompose a VNN formula to
produce (hopefully easier) sub-problems.

A \relu is \emph{fixed} when the bounds on the backward-facing or forward-facing
variable either imply that the \relu is active or imply that the \relu is
inactive. Fixing as many \relu{}s as possible reduces the complexity of the
resulting problem.

With these concepts in mind, we present two strategies:
\begin{inparaenum}
\item \emph{input-based partitioning} creates case splits over the ranges of
  input variables, relying on bound propagation to fix \relu{}s, whereas 
\item \emph{\relu{}-based partitioning} creates case splits that fix
  the phase of \relu{}s directly.
\end{inparaenum}
Both strategies are exhaustive, ensuring soundness and completeness of the \dnc
algorithm (by Theorem~\ref{thm:soundcomplete}). The \emph{branching heuristic}
which determines the choice of input variable, respectively \relu, on which to split,
can have a significant impact
on performance.
%
%
The branching heuristic should keep the total runtime of the
sub-problems low as well as achieve a good \emph{balance} between them. To
illustrate, suppose the sub-problems created by 
splitting $\relu{}_1$ take 10 and 300 seconds to solve, whereas
those created by splitting $\relu{}_2$ take 150 and 160 seconds to solve.
Though the total solving time is the same, the more balanced split, on $\relu{}_2$, 
results in shorter wall-clock time (given two parallel workers).

If most splits led to easier and balanced sub-formulas,
then \dnc{} would perform well, even without a carefully-designed branching
heuristic. However, we have observed that this is not the case for many possible splits:
the time taken to solve one (or both!) of the
sub-problems generated by such splits is comparable to that required by the
original formula (or even worse).  Therefore, an effective branching heuristic
is crucial. We describe two such heuristics below.

\paragraph{Input-based Partitioning.}
This simple partitioning strategy performs case splits over the range of an
input variable.
%
As an example, consider a VNN formula $\phi \coloneqq \phi' \land (-
2 \leq x_1 \leq 1) \land ( -2 \leq x_2 \leq 2 )$, where $x_1$ and $x_2$ are the two
input variables of a neural network encoded by $\phi'$.
Suppose we call $\Partition(\phi,2)$ using the input-splitting strategy.
The choice is between splitting on the range of $x_1$ or the range of $x_2$.
If we choose $x_1$,
the result is two sub-formulas, $\phi_1$ and $\phi_2$,
where:
$ \phi_1 \coloneqq \phi' \land { \bf(-2 \leq x_1< -0.5)} \land (-2
\leq x_2 \leq 2)$ and
$\phi_2 \coloneqq \phi' \land { \bf(-0.5 \leq x_1\leq 1)} \land (-2
\leq x_2 \leq 2)$.
%
An obvious heuristic is to choose the input with 
largest range.  A more complex heuristic was introduced in
\cite{katz2019marabou}.  It samples the network repeatedly, which yields
considerable overhead.  In fact, both of these heuristics perform reasonably well on
benchmarks with only a few inputs (the ACAS Xu benchmarks, for example).
Unfortunately, regardless of the heuristic used,
this strategy suffers from the ``curse of dimensionality'' --- with a
large number of inputs it becomes increasingly difficult to fix \relu{}s by
splitting the range of only one input variable. Thus, the
input-partitioning strategy does not scale well on such networks (e.g., perception networks), which
often have hundreds or thousands of inputs.

\paragraph{$\relu$-based Partitioning. \label{sec:relusplit}}
A complementary strategy is to partition the search space by fixing \relu{}s directly.
%
%
Consider a VNN formula $\phi \coloneqq \phi' \land {(\relu(x)=y)}$. A call to
$\Partition(\phi,2)$ using the \relu{}-based strategy results in two
sub-formulas $\phi_1$ and $\phi_2$, where
$ \phi_1 \coloneqq \phi' \land \bf {(x\leq 0)} \land (y =
0)$
and
$\phi_2 \coloneqq \phi' \land \bf {(x > 0)} \land (x = y)$.
%
Note that here, $\phi_1$ is capturing the inactive and $\phi_2$ the
active phase of the \relu.
Next, we consider a heuristic for choosing a \relu to split on.

\subsection{Polarity-based Branching Heuristics}%
\label{sec:polarity}
We want to estimate the difficulty of sub-problems created by a partitioning
strategy. One key related metric is the number of bounds that can be tightened
as the result of a \relu{}-split.
As a light-weight proxy for this metric, we propose a metric called
\emph{polarity}.
\begin{definition}
Given the \relu{} constraint $\relu{}(x) = y$, and the bounds $a \leq x \leq b$,
where $a < 0$, and $b > 0$, the polarity of the \relu{} is defined as: $p = \frac{a + b}{b
  - a}$.
\end{definition}
\noindent
Polarity ranges from -1 to 1 and measures the symmetry of a \relu{}'s bounds
with respect to zero.  For example, if we split on a \relu{} constraint with
polarity close to 1, the bound on the forward-facing variable in the active case,
$[0,b]$, will be much wider than in the inactive case, $[a,0]$.
Intuitively, forward bound tightening would therefore result in tighter bounds
in the inactive case. This means the inactive case will probably be much easier than
the active case, so the partition is unbalanced and therefore undesirable.
On the other hand, a \relu{} with a polarity close to 0 is more likely to have
balanced sub-problems.
We also observe that \relu{}s in early hidden layers are more likely to 
produce bound tightening by forward bound propagation (as there are more
\relu{}s that depend on them).

We thus propose a heuristic that picks
the \relu{} whose polarity is closest to 0 among the first $k\%$
unfixed \relu{}s, where $k$ is a configurable parameter.
Note that, in order to compute polarities, we need all input variables to be
bounded, which is a reasonable assumption.






\begin{algorithm}[t]
\caption{Iterative Propagation}
\label{alg:preprocess}
\small
\begin{algorithmic}
\STATE {\bfseries Input:} VNN query $\phi$, timeout $t$
\STATE {\bfseries Output:} preprocessed query $\phi'$.
\STATE $progress\leftarrow\top$; $\phi'\leftarrow\phi$
\WHILE{$progress = \top$} 
\STATE{$progress \leftarrow \bot$}
\FOR{$r$ in $\getUnfixedRelus(\phi')$}
\STATE{$\psi \leftarrow \polarityconstraint(r)$}
\STATE {$result = \Solve(\phi' \land \psi, t)$}
\IF{$result = \unsat{}$}
\STATE {$\psi' \leftarrow \flipPhase(\psi)$}
\STATE {$\phi' \leftarrow \phi' \land \psi'$}
\STATE {$progress \leftarrow \top$}
\ENDIF
\ENDFOR
\ENDWHILE
\STATE {\bf return} $\phi'$
\end{algorithmic}
\end{algorithm}

\subsection{Fixing \relu{} Constraints with Iterative Propagation}
\label{sec:lookahead}

As discussed earlier, the performance of \dnc{} depends heavily on
\relu{} splits that result in balanced sub-formulas. However, sometimes a
considerable portion of \relu{}s in a given neural network cannot be split in
this way. To eliminate such \relu{}s we propose a preprocessing technique called
\emph{iterative propagation}, which aims to discover and fix \relu{}s with 
unbalanced partitions.

Concretely, for each \relu{} in the VNN formula, we temporarily fix the \relu
to one of its phases and then attempt to solve the problem with a short
timeout. The goal is to detect unbalanced and
(hopefully) easy unsatisfiable cases.
Pseudocode is presented in Algorithm \ref{alg:preprocess}. The algorithm takes
as input the formula $\phi$ and the timeout $t$, and, if successful, returns an
equivalent formula $\phi'$ which has fewer unfixed \relu{}s than $\phi$. The
outer loop computes the fixed
point, while the inner loop iterates through the
as-of-yet unfixed \relu{}s. For each unfixed \relu, the $\polarityconstraint$
function yields constraints of the easier (i.e. smaller) phase.
If the solver returns $\unsat{}$, then we can safely fix the
\relu{} to its other phase using the $\flipPhase$ function. We ignore the case
where the solver returns \sat{}, since in practice this only occurs for formulas
that are very easy in the first place.

Iterative propagation complements \dnc{}, because the likelihood of finding balanced
partitions is increased by fixing \relu{}s that lead to unbalanced
partitions. Moreover, iterative propagation is highly parallelizable, as each
\relu-fixing attempt can be solved independently. In
Section~\ref{sec:experiments}, we report results using
iterative propagation as a preprocessing step, though it is possible
to integrate the two processes more closely, e.g., by performing iterative
propagation after every $\Partition$ call.



\subsection{Speeding Up Satisfiable Checks with Polarity-Based Direction Heuristics}
\label{sec:biased}
In this section, we discuss how the polarity metric introduced in
Sec.~\ref{sec:polarity} can be used to solve satisfiable instances
quickly.
When splitting on a \relu, the Reluplex algorithm faces the same choice as the
\dnc algorithm. For unsatisfiable cases, the order in which \relu case splits
are done make little difference on average, but for satisfiable instances, it
can be very beneficial if the algorithm is able to hone in on a 
satisfiable sub-problem. We refer to the strategy for picking which \relu phase
to split on first as the \emph{direction heuristic}.


We propose using the polarity metric to guide the
direction heuristic for \dnc{}.
If the polarity of a branching \relu{} is positive,
then we process the active phase first;
if the polarity is negative, we do the reverse.
Intuitively, formulas with wider bounds are more likely
to be satisfiable, and the polarity direction heuristic prefers the phase corresponding to wider bounds
for the \relu{}'s backward-facing variable.

Repairing an assignment when a \relu is violated can also be guided by polarity (recall the description of the Reluplex procedure from Sec.~\ref{sec:prelim}),
as choosing between forward- or backward-facing variables amounts to choosing which
\relu phase to explore first.
Therefore, we use this same direction heuristic to
guide the choice of forward- or backward-facing variables when
repairing the assignment. 
For example, suppose constraint $\relu(x_b) = x_f$ is part of a VNN formula
$\phi$. Suppose the range of $x_b$ is $[-2,1]$, $\ac(x_b) = -1$ and $\ac(x_f) =
1$, where $\ac$ is the current variable assignment computed by the Simplex algorithm.
To repair this violated \relu{} constraint, we can either assign 0 to $x_f$ or assign 1 to $x_b$. In this case, the \relu\ has negative polarity, meaning the negative phase is associated with wider input bounds, so our
heuristic chooses to set $\ac(x_f)=0$.

We will see in our experimental results (Sec.~\ref{sec:experiments}) that
these direction heuristics improve performance on satisfiable instances.
Interestingly, they also enhance performance on unsatisfiable
instances.


\section{Experimental Evaluation}
\label{sec:experiments}
In this section, we discuss our implementation of the proposed techniques and
evaluate its performance on a diverse set of real-world benchmarks -- safety
properties of control systems and robustness properties of perception models.

\subsection{Implementation}
We implemented the techniques discussed above in Marabou
~\cite{katz2019marabou}, which is an open-source neural network verification tool
implementing the Reluplex algorithm. Marabou is available at
\url{https://github.com/NeuralNetworkVerification/Marabou/}\footnote{The version
  of the tool used in the experiments is available
  at
  \url{https://github.com/NeuralNetworkVerification/Marabou/releases/tag/FMCAD20}.}.
The tool also
integrates the symbolic bound tightening techniques introduced in~\cite{wang2018formal}.
We refer to Marabou running the \dnc{}
algorithm as \dnc{}-Marabou. Two partitioning strategies are supported: the
original input-based partitioning strategy
and our new \relu-splitting strategy. All \dnc configurations use the following
parameters: the initial partition size $N_0$ is the number of available
processors; the initial timeout $T_0$ is 10\% of the network size in seconds;
the number of online partitions $N$ is $4$; and the timeout factor $F$ is $1.5$.
The $k$ parameter for the branching heuristic (see Sec.~\ref{sec:polarity}) is
set to 5. The per-\relu{} timeout for iterative propagation is 2 seconds.
When the input dimension is low ($\leq$ 10), symbolic bound tightening
is turned on, and the threshold parameter \textit{t} (see Sec.~\ref{sec:prelim})
is reduced from 20 to 1.
The parameters were chosen using a grid search on a small subset of benchmarks.

\subsection{Benchmarks}
The benchmark set consists of network-property pairs, with networks from three
different application domains: aircraft collision avoidance (ACAS Xu), aircraft
localization (TinyTaxiNet), and digit recognition (MNIST). Properties include
robustness and domain-specific safety properties.

\paragraph{ACAS Xu.}
The ACAS Xu family of VNN benchmarks was introduced
in~\cite{KaBaDiJuKo17Reluplex} and uses prototype neural networks trained to
represent an early version of the ACAS Xu decision logic~\cite{doi:10.2514/1.G003724}. 
The ACAS Xu benchmarks are composed of 45 fully-connected feed-forward neural
networks, each with 6 hidden layers and 50 \relu{} nodes per layer. The networks
issue turning advisories to the controller of an unmanned aircraft to avoid near
midair collisions. The network has 5 inputs (encoding the relation of the
ownship to an intruder) and 5 outputs (denoting advisories: e.g., weak left,
strong right).
Proving that the network does not produce erroneous advisories is paramount for ensuring safe aviation operation. We consider four realistic properties expected of the 45 networks.
These properties, numbered 1--4, are described in~\cite{KaBaDiJuKo17Reluplex}.

\paragraph{TinyTaxiNet.}
The TinyTaxiNet family contains perception networks used in vision-based
\textit{autonomous taxiing}: the task of predicting the position and orientation
of an aircraft on the taxiway, so that a controller can accurately adjust the
position of the aircraft~\cite{julian2020validation}. The input to the network
is a downsampled grey-scale image of the taxiway captured from a camera on the
aircraft. The network produces two outputs: the lateral distance to the runway
centerline, and the heading angle error with respect to the centerline. Proving
that the networks accurately predict the location of the aircraft even when the
camera image suffers from small noise is safety-critical. This property can be
captured as local adversarial robustness. If the
$k$\textsuperscript{th} output of the network is expected to be $b_k$ for inputs
near $\mathbf{a}$, we can check the unsatisfiability of the following VNN formula:
 \begin{align*}
 (y_k \geq b_k + \epsilon)
 \land 
 \bigwedge_{i=1}^{N}(a_i - \delta \leq x_i \leq a_i + \delta),
 \end{align*}
where $\mathbf{x}$ denotes the actual network input,
$N$ the number of network inputs,
and $y_k$ the
actual $k^\text{th}$ output. 
The network is $(\delta,\epsilon)$-locally
robust on $\mathbf{a}$, only if the formula is unsatisfiable.
The training images are compressed to either 2048 or 128 pixels,
with value range [0,1].
We evaluate the local adversarial robustness of two networks. TaxiNet1 has 2048 inputs, 1
convolutional layer, 2 feedforward layers, and 128 \relu{}s.
TaxiNet2 has 128 inputs, 5 convolutional
layers, and a total of 176 \relu{}s. For each network, we generate 100 local
adversarial robustness queries concerning the first output (distance to the
centerline). For each model, we sample 100 uniformly random images from the training data,
and sample $(\delta,\epsilon)$ pairs uniformly from the set
\(\{
  \vect{0.004,3},\allowbreak
  \vect{0.004,9},\allowbreak
  \vect{0.008,3},\allowbreak
  \vect{0.008,9},\allowbreak
  \vect{0.016,9}
\}\).
Setting $\delta=0.004$ allows a 1 pixel-value perturbation in pixel brightness
along each input dimension, and the units of $\epsilon$ are meters. We chose the
values of the perturbation bounds such that the resulting set contains a
mixture of SAT and UNSAT instances with more emphasis on the latter -- UNSAT problems are considered
more interesting in the verification domain.

\paragraph{MNIST.}
In addition to the two neural network families with safety-critical real-world
applications, we evaluate our techniques on three fully-connected feed-forward
neural networks (MNIST1, MNIST2, MNIST3) trained on the MNIST
dataset~\cite{MNISTWebPage} to classify hand-written digits. Each network has
784 inputs (representing a grey-scale image) with value range [0,1], and
10 outputs (each representing a digit). MNIST1 has 10 hidden layers and 10
neurons per layer; MNIST2 has 10 hidden layers and 20 neurons per layer; MNIST3
has 20 hidden layers and 20 neurons per layer.
While shallower and smaller networks may be sufficient for identifying digits and
are also easier to verify, we evaluate on deeper and larger architectures because we want to
\begin{inparaenum}[1)]
\item stress-test our techniques, and
\item evaluate the effect of moving towards larger perception network sizes
  like those used in more challenging applications.
\end{inparaenum}
We consider \textit{targeted robustness} queries,
which asks whether, for an input $\mathbf{x}$ and an incorrect
output $y'$, there exists a point in the $\ell^{\infty}$
$\delta$-ball around $x$ that is classified as $y'$.
We sample 100 such queries for each network,
by choosing random training images and random incorrect labels.
We choose $\delta$ values evenly from
$\{0.004,\allowbreak 0.008,\allowbreak 0.0016,\allowbreak 0.0032\}$.

%
%
%
%

\begin{table*}[tbh!]
\caption{Evaluation of the Techniques on ACAS Xu, TinyTaxiNet, MNIST \label{fig:results}}
\centering
\begin{footnotesize}
\begin{tabular}{@{}l|rr|rr||rr|rr|rr|rr||rr|rr@{}}
  Bench.    & \multicolumn{2}{c|}{\textbf{M}}  &\multicolumn{2}{c||}{\textbf{I}}  & \multicolumn{2}{c|}{\textbf{R}} & \multicolumn{2}{c|}{\textbf{S}} & \multicolumn{2}{c|}{\textbf{S+D}}
  & \multicolumn{2}{c||}{\textbf{S+P}}  & \multicolumn{2}{c|}{\textbf{S+D+P}}  & \multicolumn{2}{c}{\textbf{Neurify}} \\ \hline
   \lbrack\# inst.\rbrack & \#S & Time & \#S & Time & \#S & Time & \#S & Time& \#S & Time & \#S & Time & \#S & Time& \#S & Time \\
\hline\hline
    {\bf ACAS}&   40 & 17224 & \textbf{45} & \textbf{4884} & \textbf{45} & 5009 & \textbf{45} & \textbf{4884} & \textbf{45} & 5480 & \textbf{45} & 8419 & \textbf{45} & 7244 & 39 & 4167 \\ 
    \lbrack180\rbrack &   101 & 57398 & 130 & 48954 & 125 & 45036 & 130 & 48954 & 131 & 51413 & 130 & 50828 & 131 & 53717 & \textbf{133} & 1438 \\ \hline\hline
    {\bf TinyTaxi.}& 34 & 4591 & 34 & 1815 & 34 & 433& 34 & 433 & 34 & 419 & 34 & 533 & \textbf{35} & 1172 & \textbf{35} & \textbf{88} \\ 
    \lbrack200\rbrack &  141 & 33909 & 110 & 24088 & 147 & 23079 & 147 & 23079 & 147 & 22345 & \textbf{149} & \textbf{20583} & \textbf{149} & 21949 & 146 & 7158 \\ \hline\hline
    {\bf MNIST}&   11 & 2349 & 19 & 13032 & 22 & 9680 & 22 & 9680 & 26 & 11727 & 20 & 9956 & \textbf{29} & 19351 & 27 & 151 \\ 
    \lbrack300\rbrack &   140 & 64418 & 78 & 27134 & 181 & 52776 & 181 & 52776 & 183 & 59195 & 184 & 67625 & \textbf{185} & 68307 & 153 & 10640 \\ \hline\hline
    {\bf All} &   85 & 24164 & 98 & 19731 & 101 & 15122 & 101 & 14997 & 105 & 17626 & 99 & 18908 & \textbf{109} & 27767 & 101 & 4406 \\ 
    \lbrack680\rbrack &   382 & 155725 & 318 & 100176 & 453 & 120891 & 458 & 124809 & 461 & 132953 & 463 & 139036 & \textbf{465} & 143973 & 432 & 19236 \\
    \hline
\end{tabular}

\vspace{2mm}
Number of solved instances (\#S) and run-time in seconds of different
configurations. For each benchmark set, top and bottom rows show data for \\satisfiable
(SAT) and unsatisfiable (UNSAT) instances respectively. The results for configuration \textbf{S} are computed virtually from \textbf{R} and \textbf{I}.
\end{footnotesize}
\end{table*}

\subsection{Experimental Evaluation}
We present the results of the following experiments:
1) Evaluation of each technique's effect on run-time performance of Marabou on
  the three benchmark sets. We also compare against Neurify, a
  state-of-the-art solver on the same benchmarks.
2) An analysis of trade-offs when running iterative propagation pre-processing.
3) Exploration of \dnc\ scalability at a large scale, using cloud
  computing.

\subsubsection{{Evaluation of the techniques on ACAS Xu, TinyTaxiNet, MNIST}
  \label{experiment:all}}
We denote the \relu{}-based partitioning strategy as \textbf{R}, polarity-based
direction heuristics as \textbf{D}, and iterative propagation as \textbf{P}.
We denote as \textbf{S} a hybrid strategy that uses input-based partitioning on
ACAS Xu networks, and \relu{}-based partitioning on perception networks. We run
four combinations of our techniques:
\begin{inparaenum}
\item \textbf{R};
\item \textbf{S+D};
\item \textbf{S+P};
\item \textbf{S+D+P},
\end{inparaenum}
and compare them with two baseline configurations:
\begin{inparaenum}
\item the sequential mode of Marabou (denoted as \textbf{M});
\item \dnc{}-Marabou with its default input-based partitioning strategy (denoted
  as \textbf{I}).
\end{inparaenum}

We compare with Neurify~\cite{WangPWYJ18F}, a state-of-the-art solver, on the
same benchmarks. Neurify derives over-approximations of the output bounds
using techniques such as symbolic interval analysis and linear relaxation. On
ACAS Xu benchmarks, it operates by iteratively partitioning the input region to
reduce error in the over-approximated bounds (to prove UNSAT) and by randomly sampling
points in the input region (to prove SAT). On other networks, Neurify uses
off-the-shelf solvers to handle ReLU-nodes whose bounds are potentially
overestimated. Neurify also leverages parallelism, as different input regions or
linear programs can be checked in parallel.

We run all Marabou configurations and Neurify on a cluster equipped with Intel
Xeon E5-2699 v4 CPUs running CentOS 7.7. 8 cores and 64GB RAM are allocated for
each job, except for the \textbf{M} configuration, which uses 1 processor and
8GB RAM per job. Each job is given a 1-hour wall-clock timeout.

\paragraph{Results.}
Table~\ref{fig:results} shows a breakdown of the number of solved
instances and the run-time for all Marabou configurations and for Neurify. We group
the results by SAT and UNSAT instances. For each row, we highlight the entries
corresponding to the configuration that solves the most instances (ties broken
by run-time). Here are some key observations:
\begin{itemize}[wide, labelwidth=!, labelindent=0pt]
\item[--] On ACAS Xu benchmarks, both input-based partitioning (\textbf{I}) and \relu{}-based
  partitioning (\textbf{R}) yield performance gain compared with the sequential solver (\textbf{M}),
  with \textbf{I} being more effective. On perception networks, \textbf{I} solves significantly fewer
  instances than \textbf{M} while \textbf{R} continues to be effective.

\item[--]
  Comparing the performance of \textbf{S}, \textbf{S+D}, and \textbf{S+P} suggests that 
  the polarity-based direction heuristics and iterative propagation each improve the overall
  performance of \dnc{}-Marabou. Interestingly, the polarity-based heuristic improves
  the performance on not only SAT but also UNSAT instances, suggesting that by
  affecting how \relu{} constraints are repaired, direction heuristics also favorably
  impact the order of \relu{}-splitting.
  On the other hand, iterative propagation alone only improves performance on
  UNSAT instances. \textbf{S+D+P} solves the most instances among all the
  Marabou configurations, indicating that the direction heuristics and iterative
  propagation are complementary to each other.
\item[--] \textbf{S+D+P} solves significantly more instances than Neurify
  overall. While Neurify's strategy on Acas Xu benchmarks allows it to dedicate
  more time on proving UNSAT by rapidly partitioning the input region (thus
  yielding much shorter run-times than \textbf{S+D+P} on that benchmark set), its
  performance on SAT instances is subject to (un)lucky guesses. When it comes to
  perception neural networks that are deeper and have higher input dimensions,
  symbolic bound propagation, on which Neurify heavily relies, becomes more
  expensive and less effective. In contrast, Marabou does not rely solely on
  symbolic interval analysis, but in addition uses interval bound-tightening
  techniques (see \cite{katz2019marabou} for details).
\end{itemize}

\begin{figure}[ht]
  \centering
    \captionsetup{singlelinecheck = false, format= hang, justification=raggedright, labelsep=space}
    \includegraphics[width=8.2cm]{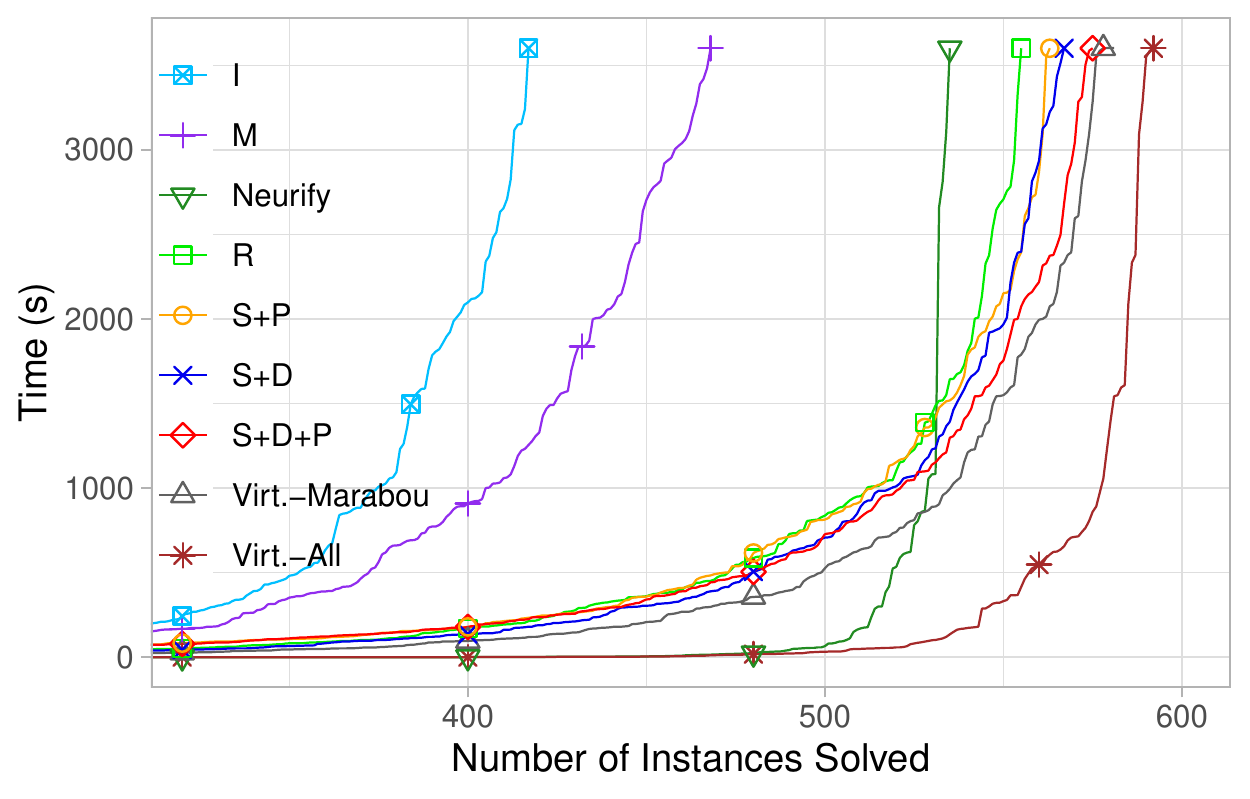}
    \caption{Cactus plot: all solvers + two virtual best configurations.}
    \vspace{4mm}
    \label{fig:cactus}
    \centering
    \includegraphics[width=8cm]{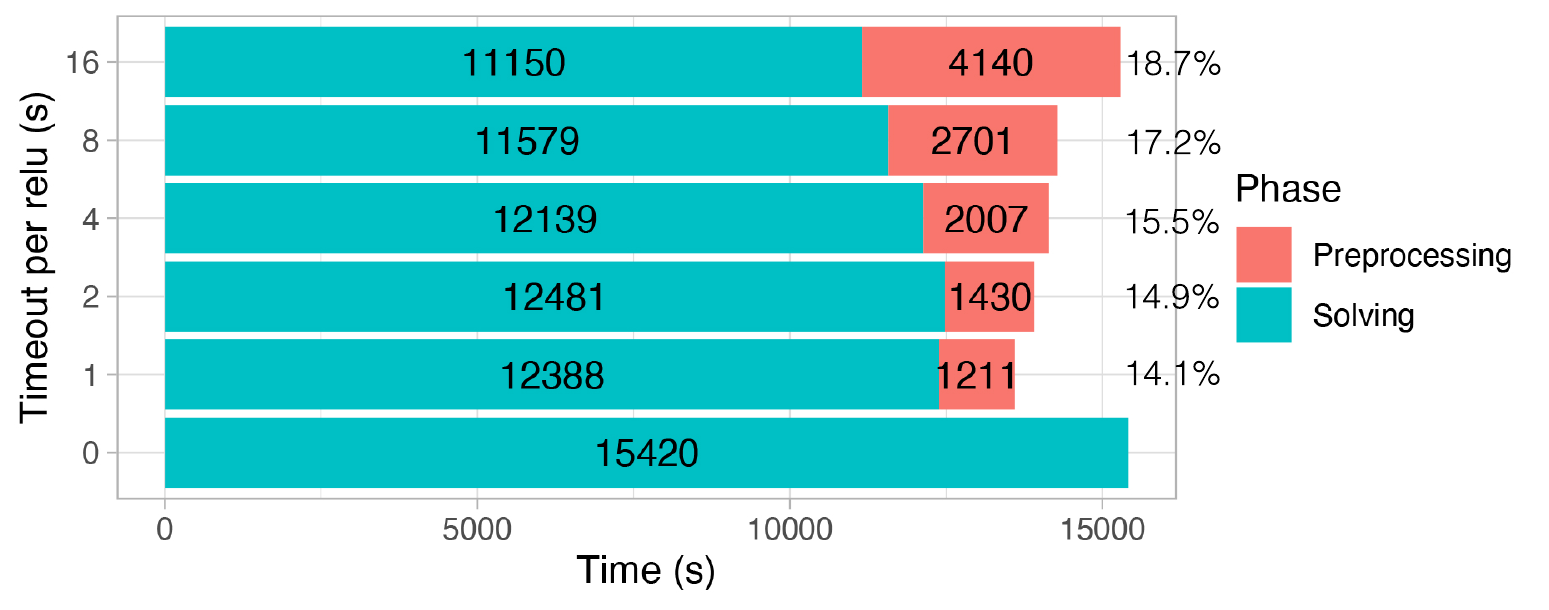}
    \caption{The effect of varying per-\relu\ timeout in preprocessing.\label{fig:prep}}
\end{figure}


Fig.~\ref{fig:cactus} shows a cactus plot of the 6 Marabou configurations and Neurify on all benchmarks.
In this plot, we also include two virtual portfolio configurations: \textbf{Virt.-Marabou} takes
the best run-time among all Marabou configurations for each benchmark, and
\textbf{Virt.-All} includes Neurify in the portfolio. Interestingly,
\textbf{S+D+P}
is outperformed by \textbf{S+D} in the beginning but surpasses \textbf{S+D}
after 500  seconds.
This suggests that iterative propagation creates overhead for easy instances, but benefits the
search in the long run. We also observe that Neurify can solve a subset of the benchmarks very rapidly, but
solves very few benchmarks after 1500 seconds. One possible explanation 
is that Neurify splits the input region and makes solver calls eagerly. While this allows it to resolve
some queries quickly, it also results in rapid (exponential) growth of the number of sub-regions and solver
calls. By contrast, Marabou splits lazily. While it creates overhead sometimes, it results in more solved
instances overall. The \textbf{Virt.-All} configuration solves significantly more instances than
\textbf{Virt.-Marabou}, suggesting that the two procedures are complementary to each other.
We note that the bound tightening techniques presented in Neurify can be potentially
integrated into Marabou, and the polarity-based heuristics and iterative propagation could
also be used to improve Neurify and other VNN tools.

\subsubsection{Costs of Iterative Propagation}
As mentioned in Sec.~\ref{alg:preprocess}, intuitively, the longer the time
budget during iterative propagation, the more \relu{}s should get fixed. To
investigate this trade-off between the number of fixed \relu{}s and the
overhead, we choose a smaller set of benchmarks (40 ACAS Xu benchmarks, 40 TinyTaxiNet
benchmarks, and 40 MNIST benchmarks), and vary the timeout parameter $t$ of
iterative propagation. Each job is run with 32 cores, and a wall-clock timeout
of 1 hour, on the same cluster as in Experiment \ref{experiment:all}.
%
%
%

\paragraph{Results.}
Fig.~\ref{fig:prep} shows the preprocessing time +
solving time of different configurations on commonly solved instances. The percentage next
to each bar represents the average percentage of \relu{}s fixed by iterative propagation.
Though the run-time and unfixed \relu{}s continue to decrease as we invest more in
iterative propagation, performing iterative propagation
no longer provides performance gain when the per-\relu{}-timeout exceeds 8 seconds.

\subsubsection{Ultra-Scalability of \dnc{}}
\DnCMarabou\ runs on a single machine, which intrinsically limits its
scalability to the number of hardware threads. To investigate how the \dnc\
algorithm scales with much higher degrees of parallelism, we implemented it on top of the \gg\
platform~\cite{gg}.

%

The \gg\ platform facilitates expressing
parallelizable computations and executing them.
To use it, the programmer expresses their computation
as a dependency graph of tasks,
where each task is an executable program
that reads and writes files.
The output files can encode the
result of the task,
or an extension to the task graph that must be executed in order to
produce that result.
The \gg\ platform includes tools for executing tasks in parallel.
Tasks can be executed \emph{locally}, using different processes,
or \emph{remotely}, using cloud services such as AWS Lambda \cite{awslambda}.
Since these cloud services offer a high
degree of concurrency with little setup or administration,
\gg\ is a convenient tool for executing massively parallel computations~\cite{gg}.
%
%

In our implementation of the \dnc\ algorithm on top of \gg,
each task runs the base solver with a timeout. If the solver completes, the task
returns the result; otherwise it returns a task graph extension encoding the
division of the problem into sub-queries.
We call this implementation of the \dnc\ algorithm, \textit{\ggMarabou}.

We measure the performance of \dnc\
and \ggMarabou\ at varying levels of
parallelism to establish that they perform similarly and to evaluate the
scalability of the \dnc\ algorithm.
Our experiments use three underlying infrastructures:
\DnCMarabou\ (denoted \infraThread),
\ggMarabou\ executed locally (\infraLocal),
and \ggMarabou\ executed remotely on AWS Lambda~\cite{awslambda} (\infraLambda).
We vary the parallelism level, $p$, from 4 to 16 for the local infrastructures
%
and from 4 to 1000 for \infraLambda.
For \infraLambda, we run 3 tests per benchmark, taking the median time to
mitigate variation from the network.
From the UNSAT ACAS Xu benchmarks which \dnc{}-Marabou can solve in under two
hours using 4 cores, we chose 5 of the hardest instances.
We set $T_0 = \SI{5}{s}$, $F = 1.5$, $N = 2^{\lfloor (5 + \log_2 p) / 3 \rfloor}$
and use the input-based partitioning strategy.

\paragraph{Results.}
Fig.~\ref{fig:scaling:interval} shows how mean runtime (across benchmarks)
varies with parallelism level and infrastructure.
\begin{figure}[t]
  \centering
  \includegraphics[width=2.3in]{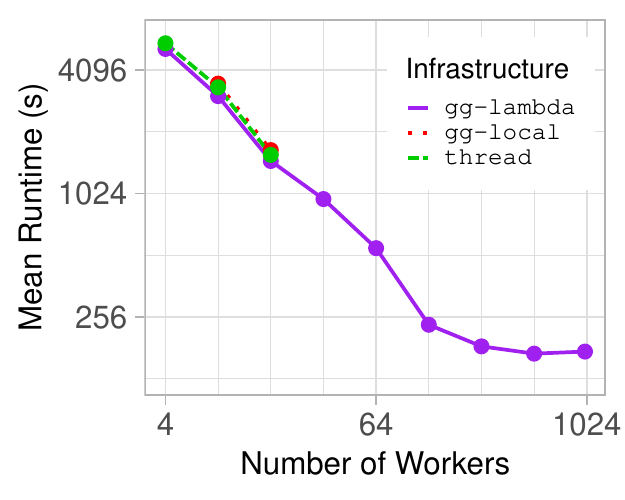}
  \caption{Ultra-Scalability of \dnc.} 
  \label{fig:scaling:interval}
\end{figure}
Our first conclusion from Fig.~\ref{fig:scaling:interval} is that \gg\ does
\textbf{not} introduce significant overhead;
at equal parallelism levels, all infrastructures perform similarly.
Our second conclusion is that \ggMarabou\ scales well up
to over a hundred workers.
This is shown by the constant slope of the runtime/parallelism level line
up to over a hundred workers.
We note that the slope only flattens when total runtime is small:
a few minutes.

\section{Related Work}
\label{sec:related}


Over the past few years, a number of tools for verifying neural network have
emerged and broadly fall into two categories --- precise and abstraction-based
methods. Precise approaches are complete and usually encode the problem as an
SAT/SMT/MILP 
constraint\cite{ehlers2017formal,BDNN,KaBaDiJuKo17Reluplex,katz2019marabou,venus}.
Abstraction-based methods are not necessarily complete and abstract the search
space using intervals~\cite{wang2018formal,WangPWYJ18F} or more complex abstract
domains~\cite{gehr2018ai2,deepz,singh2019abstract}. However, most of these
approaches are sequential, and for details, we refer the reader to the survey by Liu et
al.~\cite{liu2019algorithms}.
%
%
To the best of our knowledge, only Marabou~\cite{katz2019marabou} and
Neurify~\cite{WangPWYJ18F} (and its predecessor ReluVal~\cite{wang2018formal})
leverage parallel computing to speed up verification.
As mentioned in Sec.~\ref{sec:experiments}, Neurify combines symbolic interval analysis
with linear relaxation to compute tighter output bounds and uses off-the-shelf solvers
to derive more precise bounds for \relu{}s. These interval analysis
techniques lend themselves well to parallelization, as independent linear programs
can be created and checked in parallel. By contrast, \dnc-Marabou creates
partitions of the original query and solves them in parallel.
Neurify supports a selection of hard-coded
benchmarks and properties and often requires modifications to
support new properties, while Marabou provides verification support for a wide range
of properties.

Split-and-Conquer is inspired by the \emph{Cube-and-Conquer}
algorithm~\cite{DBLP:conf/hvc/HeuleKWB11}, which targets very hard SAT problems.
Cube-and-Conquer is a divide-and-conquer technique that partitions a Boolean
satisfiability problem into sub-problems by conjoining cubes ---a cube is a
conjunction of propositional literals--- to the original problem and then
employing a conflict-driven SAT solver~\cite{DBLP:series/faia/SilvaLM09} to
solve each sub-problem in parallel. The propositional literals used in cubes are
chosen using look-ahead~\cite{DBLP:series/faia/HeuleM09} techniques.
Divide-and-conquer techniques have also been used to parallelize SMT solving~\cite{parallelSMT,marescotti2016clause}.
Our approach uses similar ideas to those in previous work, but is optimized for the VNN domain.

Iterative propagation is, in part, inspired by the look-ahead techniques. While the latter is
used to partition the search space, the former is used to reduce the overall complexity of the problem.

 	

\section{Conclusions and Future Work}
\label{sec:concl}

In this paper, we presented a set of techniques that leverage parallel
computing to improve the scalability of neural network verification.  We
described an algorithm based on partitioning the verification problem in an
iterative manner and explored two strategies that work by partitioning the
input space or by splitting on \relu{}s,
respectively. We introduced a branching heuristic and a direction heuristic,
both based on the notion of polarity. We also introduced a highly
parallelizable pre-processing algorithm for simplifying neural
network verification problems.
Our experimental evaluation shows the benefit of these techniques on existing and
new benchmarks.  A preliminary experiment with
ultra-scaling using the gg platform on Amazon Lambda also shows promising results.

Future work includes:
\begin{inparaenum}[i)]
\item Investigating more dynamic strategies for choosing hyper-parameters of the \dnc{} framework.
\item Investigating different ways to interleave iterative propagation with \dnc{}.
\item Investigating the scalability of ReLU-based partitioning to high levels of
parallelism.
\item Improving the performance of the underlying solver, Marabou, by integrating
conflict analysis (as in CDCL SAT solvers and SMT solvers) and more
advanced bound propagation techniques such as those used by Neurify.
\item Extending the techniques to handle other piecewise-linear activation
functions such as hard tanh and leaky ReLU, to which the notion of
polarity applies.
\end{inparaenum}


\section*{Acknowledgements}
The project was partially supported by grants from the Binational
Science Foundation (2017662), the Defense Advanced Research Projects Agency
(FA8750-18-C-0099), Ford Motor Company,
the Israel Science Foundation (683/18), and the National Science
Foundation (1814369).


\bibliography{bibli}
\bibliographystyle{IEEEtran}

\newpage
\onecolumn
\section*{\huge{Appendix}}

\subsection{More Details on \ggMarabou}
The \gg\ platform is a tool for expressing
parallelizable computations and executing them. To use it, the programmer expresses
their computation as \emph{task graph}: a dependency graph of tasks, where each
task is an executable program (e.g., a binary or shell script) that reads some
input files and produces some output files. These output files can encode the
result of the task, or an extension to the task graph that must be executed in order to
produce that result. In our implementation of the \dnc\ algorithm on top of \gg,
each task runs the base solver with a timeout. If the solver completes, the task
returns the result, otherwise it returns a task graph extension encoding the
division of the problem into sub-queries.

The local part of the \gg\ experiment is run on a machine with 24 Xeon E5-2687W v4 
CPUs, 132GB RAM, running Ubuntu 20.04.

\subsection{More Details on Evaluation of Techniques}
We present here a more detailed report of the runtime performance of different configurations and
Neurify, as shown in Table \ref{fig:results-full}. We break down the ACAS Xu benchmark
family by properties, and the other two benchmark sets by networks.

Fig.~\ref{fig:scatter} shows the log-scaled pairwise comparisons between different configurations.

\begin{table*}[bh!]
\centering
\begin{footnotesize}
\begin{tabular}{@{}l|rr|rr||rr|rr|rr||rr|rr@{}}
  Bench.    & \multicolumn{2}{c|}{\textbf{M}}  &\multicolumn{2}{c||}{\textbf{I}}  & \multicolumn{2}{c|}{\textbf{R}} & \multicolumn{2}{c|}{\textbf{S+D}}
  & \multicolumn{2}{c||}{\textbf{S+P}}  & \multicolumn{2}{c|}{\textbf{S+D+P}}  & \multicolumn{2}{c}{\textbf{Neurify}} \\ \hline
   \lbrack\# inst.\rbrack & S & T & S & T & S & T & S & T & S & T & S & T & S & T \\
\hline\hline
 ACAS1 &   0 & 0 & 0 & 0 & 0 & 0 & 0 & 0 & 0 & 0 & 0 & 0 & 0 & 0\\ 
    45 &   17 & 32455 & 42 & 37125 & 37 & 33141 & 43 & 40936 & 42 & 36783 & 43 & 40107 & \textbf{45} & \textbf{558} \\ \hline
 ACAS2 &   34 & 17210 & \textbf{39} & \textbf{4863} & \textbf{39} & 4985 & \textbf{39} & 5456 & \textbf{39} & 8228 & \textbf{39} & 7074 & 33 & 4167 \\ 
  45 &   0 & 0 & 4 & 5461 & 4 & 5121 & 4 & 4042 & 4 & 4070 & 4 & 4156 & \textbf{4} & \textbf{88} \\ \hline
  ACAS3 &   \textbf{3} & 9 & \textbf{3} & 10 & \textbf{3} & 12 & \textbf{3} & 13 & \textbf{3} & 91 & \textbf{3} & 83 & \textbf{3} & \textbf{0} \\ 
  45 &   \textbf{42} & 18254 & \textbf{42} & 4900 & \textbf{42} & 4569 & \textbf{42} & 4826 & \textbf{42} & 6571 & \textbf{42} & 6295 & \textbf{42} & \textbf{742} \\ \hline
  ACAS4 &   \textbf{3} & 5 & \textbf{3} & 11 & \textbf{3} & 12 & \textbf{3} & 11 & \textbf{3} & 100 & \textbf{3} & 87 & \textbf{3} & \textbf{0} \\ 
  45 &   \textbf{42} & 6689 & \textbf{42} & 1468 & \textbf{42} & 2205 & \textbf{42} & 1609 & \textbf{42} & 3404 & \textbf{42} & 3159 & \textbf{42} & \textbf{49} \\ \hline 
 {\bf ACAS}&   40 & 17224 & \textbf{45} & \textbf{4884} & \textbf{45} & 5009 & \textbf{45} & 5480 & \textbf{45} & 8419 & \textbf{45} & 7244 & 39 & 4167 \\ 
 180&   101 & 57398 & 130 & 48954 & 125 & 45036 & 131 & 51413 & 130 & 50828 & 131 & 53717 & \textbf{133} & \textbf{1438} \\ \hline\hline
  TinyTaxiNet1 &   \textbf{11} & 1168 & \textbf{11} & 1579 & \textbf{11} & 370 & \textbf{11} & 356 & \textbf{11} & 337 & \textbf{11} & 357 & \textbf{11} & \textbf{85} \\ 
  100 &   84 & 27773 & 63 & 17883 & 89 & 14052 & 89 & 12521 & 89 & 14651 & \textbf{89} & \textbf{14683} & 81 & 7148 \\ \hline
  TinyTaxiNet2 &   23 & 3423 & 23 & 236 & 23 & 63 & 23 & 63 & 23 & 196 & \textbf{24} & 815 & \textbf{24} & \textbf{3} \\ 
  100 &   57 & 6136 & 47 & 6205 & 58 & 9027 & 58 & 9824 & 60 & 5932 & 60 & 7266 & \textbf{65} & \textbf{10} \\ \hline 
  {\bf TinyTaxiNet}& 34 & 4591 & 34 & 1815 & 34 & 433 & 34 & 419 & 34 & 533 & \textbf{35} & 1172 & \textbf{35} & \textbf{88} \\ 
  200 &  141 & 33909 & 110 & 24088 & 147 & 23079 & 147 & 22345 & 149 & 20583 & \textbf{149} & \textbf{21949} & 146 & 7158 \\ \hline\hline
 MNIST1 &   9 & 2178 & 11 & 3658 & 12 & 2190 & 12 & 1412 & 12 & 3682 & \textbf{13} & \textbf{5715} & 8 & 108 \\ 
  100 &   73 & 11880 & 47 & 12387 & \textbf{80} & \textbf{12999} & 80 & 15213 & 80 & 14090 & 80 & 13571 & 54 & 2285 \\ \hline
  MNIST2 &   2 & 171 & 5 & 3494 & 6 & 3246 & 7 & 3787 & 4 & 1782 & 9 & 9140 & \textbf{13} & \textbf{6} \\ 
  100 &   37 & 22069 & 17 & 5698 & 46 & 14576 & 46 & 14833 & \textbf{48} & \textbf{19026} & 47 & 15141 & 45 & 3247 \\ \hline
  MNIST3 &   0 & 0 & 3 & 5880 & 4 & 4244 & \textbf{7} & 6528 & 4 & 4492 & \textbf{7} & \textbf{4496} & 6 & 36 \\ 
  100 &   30 & 30469 & 14 & 9049 & 55 & 25201 & 57 & 29149 & 56 & 34509 & \textbf{58} & \textbf{39595} & 54 & 5108 \\ \hline
 {\bf MNIST}&   11 & 2349 & 19 & 13032 & 22 & 9680 & 26 & 11727 & 20 & 9956 & \textbf{29} & \textbf{19351} & 27 & 151 \\ 
  300 &   140 & 64418 & 78 & 27134 & 181 & 52776 & 183 & 59195 & 184 & 67625 & \textbf{185} & \textbf{68307} & 153 & 10640 \\ \hline\hline
{\bf All} &   85 & 24164 & 98 & 19731 & 101 & 15122 & 105 & 17626 & 99 & 18908 & \textbf{109} & \textbf{27767} & 101 & 4406 \\ 
  680 &   382 & 155725 & 318 & 100176 & 453 & 120891 & 461 & 132953 & 463 & 139036 & \textbf{465} & \textbf{143973} & 432 & 19236 \\ \hline
\end{tabular}
\vspace{2mm}
\caption{Number of solved instances (S) and run time in seconds (T) of different
  configurations. For each family, top and bottom rows show data for satisfiable
  (SAT) and unsatisfiable (UNSAT) instances respectively. \label{fig:results-full}}
\end{footnotesize}
\vspace{-2mm}
\end{table*}

\begin{figure}[ht]
  \centering
    \includegraphics[width=18cm]{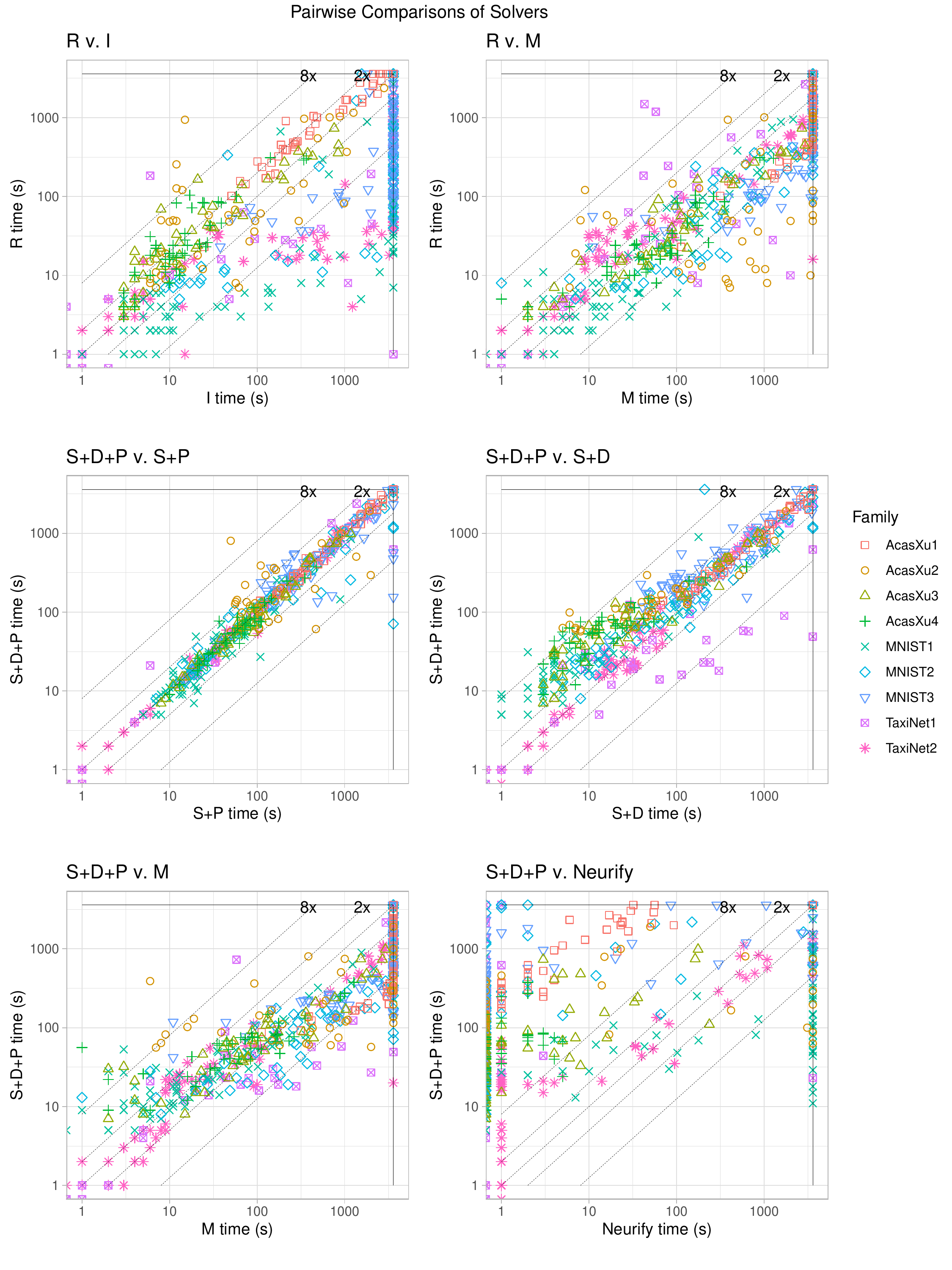}
    \caption{Pairwise comparison between different configurations on all benchmarks.\label{fig:scatter}}
\end{figure}

\end{document}